\newtheorem{theorem}{Theorem}
\newtheorem{lemma}{Lemma}
\newtheorem{corollary}{Corollary}
\begin{document}


\title{Mechanism design for resource allocation --\\
 with applications to centralized multi-commodity routing}



%
%
%
%

%

\numberofauthors{3}

\author{
%
\alignauthor
Qipeng Liu\\
       \affaddr{Institute of interdisciplinary}\\
       \affaddr{information sciences}\\
       \affaddr{Tsinghua University, China}\\
       \email{lqp1831@gmail.com}
\alignauthor
Yicheng Liu\\
       \affaddr{Institute of interdisciplinary}\\
       \affaddr{information sciences}\\
       \affaddr{Tsinghua University, China}\\
       \email{61c@live.cn}
\alignauthor Pingzhong Tang\\
       \affaddr{Institute of interdisciplinary}\\
       \affaddr{information sciences}\\
       \affaddr{Tsinghua University, China}\\
       \email{kenshinping@gmail.com}
}

\maketitle

\begin{abstract}
We formulate and study the {\em algorithmic mechanism design} problem for a general class of resource allocation settings, where the center redistributes the private resources brought by individuals. Money transfer is forbidden. Distinct from the standard literature, which assumes the amount of resources brought by an individual to be public information, we consider this amount as an agent's private, possibly multi-dimensional type. Our goal is to design truthful mechanisms that achieve two objectives: maxmin and Pareto efficiency.

For each objective, we provide a reduction that converts {\em any optimal algorithm} into a {\em strategy-proof mechanism} that achieves the {\em same objective}. Our reductions do not inspect the input algorithms but only query these algorithms as oracles. 

Applying the reductions, we produce strategy-proof mechanisms in a non-trivial application: network route allocation. Our models and result in the application are valuable on their own rights.
\end{abstract}







\keywords{ mechanism design, strategyproof, resource allocation, network routing}

\section{Introduction}
One of the most important problems at the intersection of economics and computation is {\em algorithmic mechanism design},
which dates back to the seminal work of Nisan and Ronen
~\cite{Nisan99}.
The basic problem asks:

{\em  Given an algorithmic optimization problem, is it possible to efficiently produce a truthful mechanism that (approximately) achieves the optimal value of the original problem?}

Over the past decade, there has been a number of breakthroughs regarding this problem in settings where money transfer is allowed, including the rich literature on truthful welfare-maximizing mechanism design~\cite{Papadimitriou2008,Dughmi2010,Dobzinski2012}, and Bayesian incentive compatible (BIC) mechanism design~\cite{HartlineB2010}. Recently, the same problem has been investigated under the context of revenue optimal mechanism design~\cite{Cai2012,Cai2013}.

Distinct from the above literature, we study algorithmic mechanism design {\em without money}~\cite[Chapter 10]{nisan07a}. In particular, we study a general class of resource allocation problems, where each agent brings a certain amount of resources and the mechanism distributes these resources to achieve certain objectives. It is important to note that our setting differs from the standard resource allocation literature (such as one-sided matching, hedonic games, etc) in that each agent's type is the amount of resources she brings, rather than her preference over allocations. Our goal is to design strategy-proof mechanisms that achieve two objectives: maxmin and Pareto efficiency.

Our framework is rich enough to encompass, or at least heavily intersect with, a variety of applications, such as cloud resource allocation~\cite{Ghodsi2011,Parkes2012}, facility location~\cite{Procaccia2009}, fair division~\cite{Procaccia2013} and network route allocation [this paper].

\subsection*{Main results}

We make the following contributions.

\begin{enumerate}
\item {\bf Black-box reductions}

For any resource allocation problem, we provide two reductions, one for maxmin and the other for Pareto efficiency, that automatically convert any optimal algorithm into a strategy-proof mechanism that achieves the same objective. Our reductions do not make use of any algorithm but only assume black-box access to the algorithm.

\begin{itemize}
\item For the maxmin objective, we construct a polynomial time algorithm that, for any input, generates the optimal group-strategy-proof mechanism by calling the optimal algorithm only once.
\item For Pareto efficiency, we show that, if there is an algorithm that {\em serially optimizes} the utility profile, the algorithm {\em per se} is strategy-proof.
\end{itemize}

\item {\bf Application}

Simple reductions as they may seem, their generalities are demonstrated by a complex, practical application: network route allocation.

\noindent{\bf Network route allocation}. In this application, each node downloads a file from a server node located somewhere in the network. It can download via a direct route to the sever (it can do so without joining the mechanism, imposing an IR constraints to the design problem) or, by joining the mechanism, download via certain indirect routes that pass through other nodes. Each node has a certain private capacity (resource) that specifies the maximum limit of flow that can pass through it. Upon receiving the reports of private capacities, the mechanism returns a multi-commodity flow on the network. Our model encompasses both client-server networks and peer-to-peer networks~\cite[Chapter 4]{Parkes13}. Applying our reductions, one can obtain IR and strategy-proof mechanisms that achieve either maxmin or efficiency.

\end{enumerate}

\subsection*{Related work}

In the literature of mechanism design without money, a popular line of work concerns how to locate a facility~\cite{Miyagawa2001,Procaccia2009}. They study a mechanism design problem where agents are located on the real line and the mechanism select the location of a facility, where each agent's cost is its distance to the facility. The objectives are {\em maxmin} as well as {\em efficiency}. They give tight bounds on ratio of the optimal strategy-proof mechanism over the optimal algorithm without incentive constraints. This line of work has been extended to a number of variations, such as there are two facilities~\cite{Lu2010}, or the utility function is not linear~\cite{Fotakis13}, or the objective is the least square~\cite{Feldman2013}.

Resource allocation is also of central importance to the multiagent system community. It has been widely discussed in the application of smart grid~\cite{Rogers2012,Ramchurn2012} and has been of the main application scenario of security games (see~\cite{Yang2014}).

Our resource allocation notion is related to the classic bilateral trade setting~\cite{Myerson1983}, resource exchange setting~\cite{Barbera1995}, as well as the recently coined reallocation settings~\cite{Blumrosen2014}. A key distinction is, as mentioned, our setting treats the amount of contributed resources as private type while this amount is public information in their settings and they all treat valuation functions as types. Furthermore, their objectives are focused on characterization of efficient truthful mechanisms while our objective is to reduce mechanism design to algorithm design. 

The second blackbox reduction in this paper, i.e., the one for the Pareto efficiency objective, makes use of algorithms that serially maximizes the utility profiles. This idea naturally relates our reduction to the (randomly) serial dictatorship literature~\cite{svensson1999strategy,bogomolnaia2004random,manea2007serial,aziz2013pareto}, which are known to be strategyproof and Pareto efficient. Note again that their strategyproofness is defined with respect to truthful announcement of preferences, rather than contributed resources. Another distinction is that the serial dictatorship literature is concerned with allocation of indivisible items while our focus is on divisible resources. This distinction is not essential though, if randomization are permitted.

\section{The resource allocation problem}
\label{sec:setting}
We now formulate the resource allocation problem. 

An environment specifies the parameters for the mechanism designer to operate.

\begin{definition}
An environment is a tuple $\{\mathcal{N}, \mathcal{S}, \mathcal{P}, \mathcal{O}, u\}$, where
\begin{itemize}
\item $\mathcal{N}$ denotes the set of $n$ agents,
\item $\mathcal{S}=\mathcal{S}_1 \times \mathcal{S}_2 \times \cdots \times \mathcal{S}_n$, where each $\mathcal{S}_i$ is the private type set of agent $i$,
\item $\mathcal{P}$ is a set of public information and resources shared by all agents,
\item $\mathcal{O}$ is the set of outcomes.
\end{itemize}
\end{definition}


By revelation principle~\cite{Myerson81}, one can without loss restrict attentions to the set of direct revelation mechanisms, which can be regarded as functions that maps agents' reported type profile and public info. into an outcome.


\begin{definition}
Given an environment,  a deterministic mechanism is a function $\mathcal{M} : \mathcal{S} \times \mathcal{P} \to \mathcal{O}$.
\end{definition}

Each agent $i$ in the environment comes with a private amount of resources $s_i\in \mathcal{S}_i$ and is asked by the mechanism to report this quantity. Upon receiving all inputs, the mechanism returns an outcome, i.e., an allocation of resources.


For example, consider a variant of the {\em dominant resource problem} defined in~\cite{Ghodsi2011}, $\mathcal{S}_i = {\mathbb{R}_{+}}^m$ where the $j$-th element in the vector is the amount of $j$-th resource that agent $i$ owns. Agent $i$ needs $c_{i, 1}$ units of the first type of resource and $c_{i, 2}$ units of
the second and $\cdots$ $c_{i, m}$ units of the $m$-th in order to conduct $1$ unit of job task. Thus, $\mathcal{P}$ is the set of all possible such $\{(c_{i, 1}, c_{i, 2}, \cdots, c_{i, m}) \}_{j=1}^{n}$ and $\mathcal{O}$ is the set of all possible distributions of resources.

A resource allocation environment imposes certain feasibility constraints on any mechanism defined on it. For example, no mechanism shall allocate $3$ CPU units to an agent if there are only $2$ units of CPU within the society.

\begin{definition}
For any mechanism input $s_1 \in \mathcal{S}_1, s_2 \in \mathcal{S}_2, \cdots, s_n \in \mathcal{S}_n, p \in \mathcal{P}$,
define $FEA(s_1, s_2, \cdots, s_n, p)\subseteq \mathcal{O}$ as a set of {\em feasible outcomes} under $(s_1, s_2, \cdots, s_n, p)$.

\end{definition}

Consider again the dominant resource problem, $o \in FEA($ $s_1, \cdots, s_n, p)$ if and only if the resource allocation prescribed by $o$ is feasible\footnote{Feasibility means does not over-allocate any type of resource.} under the input.

We consider environments and feasibility constraints that satisfy the following resource monotone property.


\begin{definition}
\label{resource-pro}
For an environment $\{\mathcal{N}, \mathcal{S}, \mathcal{P}, \mathcal{O}\}$, a feasibility constraint fuction $FEA$ is {\em resource monotone} if $\forall i$, there is a partial order $\leq_i$ on $\mathcal{S}_i$, such
    that given any input profile $s_1, s_2, \cdots, s_n$ and $p\in \mathcal{P}$, for any $s'_i \leq s_i$, we have
    $$FEA(s_1, \cdots, s'_i, \cdots, s_n, p) \subseteq FEA(s_1, \cdots, s_i, \cdots, s_n, p).$$
\end{definition}


Intuitively, resource monotonicity states that the larger amount of resource one contributes, the larger is the set of feasible allocations. For a mechanism defined on an environment satisfying the resource monotonicity, we call it a {\em resource allocation mechanism}.


%
%

Every player has a utility function $u_i : \mathcal{O} \times \mathcal{P} \to \mathbb{R} $.
 ~In this paper, we consider a type of $u_i$ that must only be related to public information and outcome. This means that the agent's utility does not depend on her private resources. In addition, we assume that there exists an {\em empty outcome} $o^*$ that
for any other outcome $o$, $u_i(o^*, p) \leq u_i(o, p)$.

The above definition of utility is not uncommon in the resource allocation domain. Consider again the dominant resource problem, the number of total units of computation player
$i$ can conduct only depends on the allocation outcome $o$ and her $c$-vector $(c_{i, 1}, c_{i,2 }, \cdots, c_{i, m})$
and the empty outcome is the allocation where no one gets any resource.

The goal of resource allocation mechanism design is to optimize a certain real-valued function $w : \mathcal{O} \to \mathbb{R}$ for equilibrium outcomes.  Here $w$ can be different functions according to
different application scenarios. For example, in some cases $w$ denotes social welfare, i.e., $w(o) = \sum_{i=1}^n u_i(o, p)$; while in some other cases, $w$ denotes the minimum utility among all agents', i.e.,
$w(o) = \min_i u_i(o, p)$.

The solution concept in this paper is the dominant strategy equilibrium.
\begin{definition}
  A mechanism is {\em dominant-strategy truthful, (aka. strategy-proof)} if for any $s_1, \cdots, s_n, p$ and any $s'_i \leq_i s_i$\footnote{We assume throughout that  agents never overreport. Overreport can lead to infeasible allocations, which can be easily detected and punished.}, then
\begin{equation*}
    u_i\left(\mathcal{M}(s_1 \cdots s_i, \cdots s_n, p), p\right)
    \geq u_i\left(\mathcal{M}(s_1 \cdots s'_i \cdots s_n, p), p \right)
\end{equation*}
\end{definition}

\begin{definition}
A mechanism is {\em group strategy-proof} if for any $s_1, \cdots, s_n, p$ and any $s'_1, \cdots, s'_n$ where each $s'_i \leq_i s_i$,  then
\begin{equation*}
   u_i\left(\mathcal{M}(s_1, \cdots, s_n, p),p \right) \geq u_i\left( \mathcal{M}(s'_1, \cdots, s'_n, p), p \right)
\end{equation*}
\end{definition}

It is easy to see that the definition above subsumes (is stronger than) another definition of group strategyproofness that no subset of agents could jointly deviate so that the resulting outcome is better off for anyone in this subset.


Our goal in this paper is to provide a tool to design strategy-proof mechanisms that optimize certain objectives. In particular, given an algorithm that optimizes a certain objective, we use this algorithm as a black-box and return a strategy-proof mechanism that optimizes the same objective.

\section{Max-Min}

In this section, we consider the problem of designing a strategy-proof mechanism that maximizes the minimal utility among all agents, i.e., $w(o) = \min_{i\in \mathcal{N}} u_i(o, p)$. We show that, given an algorithm that computes the $\arg \max_{o} w(o)$, we can
construct a strategy-proof mechanism with output $o'$ such that $w(o')  = w(o)$ for each input.

\begin{definition}
   An environment is continuous if for any $(s_1, $  $\cdots, s_n, p)$,
    let $o_w = \arg\max_o w(o)$ and for any $u'_i \leq u_i(o_w,p)$, there must exist some outcome
    $o' \in FEA(s_1, s_2, \cdots, s_i, \cdots, $ $s_n, p)$ such that $u_i(o', p) = u'_i$ and
    for any $j\ne i$, $u_j(o', p) = u_j(o_w,p)$.
\end{definition}

\begin{theorem}
\label{thm:1}
    If there exists an algorithm $\mathcal{A}$ that optimizes $w(o) = \min_i u_i(o, p)$ for some continuous, resource monotone environment,
    one can efficiently construct a strategy-proof mechanism $\mathcal{M}$ that optimizes $w(o)$.
\end{theorem}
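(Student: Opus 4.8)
The plan is to exploit \emph{continuity} to turn the (possibly non-truthful) optimal algorithm $\mathcal{A}$ into a mechanism that gives \emph{every} agent exactly the maxmin value. The reason $\mathcal{A}$ itself need not be strategy-proof is that when agent $i$ underreports, resource monotonicity shrinks $FEA$, so the maxmin optimum over this smaller feasible set may award agent $i$ a higher utility than the original optimum did; moreover $\mathcal{A}$ may break ties among maxmin optima arbitrarily. Equalizing all utilities to the optimal value removes both sources of manipulability at once.

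First I would describe the construction $\mathcal{M}$, which calls $\mathcal{A}$ only once. On input $(s_1,\dots,s_n,p)$, run $\mathcal{A}$ to obtain a maxmin optimizer $o_w$ and the value $v=w(o_w)=\min_i u_i(o_w,p)$. Since $v\leq u_i(o_w,p)$ for every $i$, I can invoke continuity with $u'_i=v$ to produce a feasible outcome in which agent $i$'s utility is lowered to $v$ while all other agents' utilities are unchanged. Applying this once per over-endowed agent (those with $u_i(o_w,p)>v$) yields, after at most $n$ steps, a feasible outcome $o'$ with $u_i(o',p)=v$ for all $i$; the mechanism outputs $o'$. The point justifying the iteration is that every intermediate outcome still has minimum utility $v$ and hence is itself a maxmin optimizer, so continuity remains applicable at each step, and because each step keeps the other agents' utilities fixed it never disturbs an agent already set to $v$.

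Next I would prove strategy-proofness via a monotonicity lemma: the maxmin value is weakly decreasing in reported resources. Indeed, if $s'_i\leq_i s_i$ then $FEA(s_1,\dots,s'_i,\dots,s_n,p)\subseteq FEA(s_1,\dots,s_i,\dots,s_n,p)$ by resource monotonicity, so maximizing $\min_i u_i$ over the smaller set yields a value $v'\leq v$. Under truthful reporting $\mathcal{M}$ gives agent $i$ exactly $v$; if she instead reports $s'_i\leq_i s_i$, the mechanism equalizes every agent to the new value $v'\leq v$, so agent $i$ receives exactly $v'\leq v$ and cannot gain. The same argument upgrades to group strategy-proofness: a joint deviation with every $s'_k\leq_k s_k$ only shrinks $FEA$ further, dropping the value to some $v''\leq v$, and since $\mathcal{M}$ awards $v''$ to all agents, no member of the coalition is better off.

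The main obstacle I expect is the careful handling of the iterated continuity step: I must argue that the definition of continuity, stated for an argmax $o_w$, continues to apply to the partially-equalized outcomes, which requires observing that they remain maxmin-optimal (same value $v$) and that the order in which agents are processed is immaterial. A secondary point is efficiency: the reduction makes a single oracle call to $\mathcal{A}$ and performs at most $n$ continuity adjustments, so it runs in polynomial time as claimed.
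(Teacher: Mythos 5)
Your proposal is correct and follows essentially the same route as the paper: a single oracle call to $\mathcal{A}$, then $n$ applications of continuity to equalize every agent's utility to the maxmin value $u^*$, with (group) strategy-proofness following from the fact that resource monotonicity makes the maxmin value weakly decreasing under underreporting. Your explicit remark that the intermediate, partially-equalized outcomes remain maxmin-optimal (so continuity stays applicable at each step) is a detail the paper's proof leaves implicit, but the argument is the same.
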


This theorem states that for the resource allocation problems under consideration, designing strategy-proof mechanism with max-min objective is no harder than
the corresponding algorithm design problem.

We prove the following stronger theorem instead.
\begin{theorem}
\label{thm:gsp}
    If there exists an algorithm $\mathcal{A}$ that optimizes $w(o) = \min_i u_i(o, p)$ for some continuous , resource monotone environment,
    one can efficiently construct a group-strategy-proof mechanism $\mathcal{M}$ that optimizes $w(o)$.
\end{theorem}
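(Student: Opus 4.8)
The plan is to build $\mathcal{M}$ so that, on every reported profile, it returns an outcome in which \emph{all} agents receive exactly the optimal max-min value $w^\star = w(o_w)$, where $o_w = \mathcal{A}(s_1,\dots,s_n,p)$. Tying each agent's utility to this single scalar $w^\star$ is what will force group strategy-proofness: since agents may only under-report and under-reporting shrinks the feasible set, $w^\star$ can only drop, so every agent is weakly worse off after any joint misreport.

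First I would define the mechanism. Run $\mathcal{A}$ to obtain $o_w$ and set $w^\star = \min_i u_i(o_w,p)$. Since $u_i(o_w,p)\ge w^\star$ for every $i$, I would use continuity to drive each agent's utility down to $w^\star$: process the agents one at a time, and when handling agent $i$ apply continuity to lower $u_i$ to $w^\star$ while leaving every other agent's utility unchanged. Let $o'$ denote the outcome after all $n$ reductions, and set $\mathcal{M}(s,p)=o'$. The crucial bookkeeping is that each intermediate outcome still attains the optimum: after any prefix of reductions every utility is $\ge w^\star$ and at least one equals $w^\star$, so the running minimum stays $w^\star$ and the intermediate outcome is again a maximizer of $w$. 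Hence continuity is applicable at the next step (it is stated for an argmax outcome, and each intermediate outcome is one), and every invocation returns a feasible outcome, so $o'\in FEA(s_1,\dots,s_n,p)$ with $u_i(o',p)=w^\star$ for all $i$. In particular $w(o')=w^\star=w(o_w)$, so $\mathcal{M}$ optimizes $w$.

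Next I would record the monotonicity that powers incentive compatibility. Write $W(s_1,\dots,s_n)=\max_{o\in FEA(s_1,\dots,s_n,p)} w(o)$ for the optimal value as a function of the reports. Resource monotonicity gives, coordinate by coordinate, $FEA(\dots,s'_i,\dots)\subseteq FEA(\dots,s_i,\dots)$ whenever $s'_i\le_i s_i$; chaining this across all coordinates yields $FEA(s'_1,\dots,s'_n)\subseteq FEA(s_1,\dots,s_n)$ whenever every $s'_i\le_i s_i$, and therefore $W(s'_1,\dots,s'_n)\le W(s_1,\dots,s_n)$, since the maximum is taken over a smaller set.

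Finally I would assemble group strategy-proofness. By construction $u_i(\mathcal{M}(s,p),p)=W(s_1,\dots,s_n)$ for every agent $i$, and likewise $u_i(\mathcal{M}(s',p),p)=W(s'_1,\dots,s'_n)$ for any joint under-report with each $s'_i\le_i s_i$. The monotonicity above then gives $u_i(\mathcal{M}(s,p),p)=W(s)\ge W(s')=u_i(\mathcal{M}(s',p),p)$ for every $i$ simultaneously, which is exactly the group strategy-proofness condition; the single-agent statement of Theorem~\ref{thm:1} follows a fortiori. The step I expect to be the main obstacle is the equalization in the second paragraph: I must ensure continuity really can be \emph{iterated}, i.e.\ that each partially-equalized outcome is still an optimal outcome so that the continuity hypothesis applies to it, and that reductions performed for earlier agents are preserved. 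The latter holds because continuity freezes the utilities of all agents other than the one being adjusted, and the former holds because every intermediate utility stays at least $w^\star$ with the minimum pinned at $w^\star$.
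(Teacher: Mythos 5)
Your proposal is correct and follows essentially the same route as the paper: run $\mathcal{A}$, then use continuity to equalize every agent's utility at the max-min value $u^\star$, and derive group strategy-proofness from the chain of feasible-set inclusions given by resource monotonicity, which makes the optimal value (hence every agent's utility under $\mathcal{M}$) monotone in the reports. Your extra care about why continuity can be iterated (each intermediate outcome remains an optimizer of $w$) is a point the paper's proof leaves implicit, but the mechanism and argument are the same.
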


Given an algorithm $\mathcal{A}$, we can construct $\mathcal{M}$ as follows:
\begin{algorithm}
\begin{enumerate}
\renewcommand{\labelitemi}{}
    \item on inputs $s_1, s_2,\cdots, s_n, p$, run $\mathcal{A}(s_1, \cdots, s_n, p)$ to get the outcome $o_0$;
    \item let $u^* \gets \min_i u_i(o_0, p)$;
    \item for $i = 1, 2, \cdots, n$:
        \begin{itemize}
            \item find $o_i$ such that for all $j \ne i$, $u_j(o_i, p) = u_j(o_{i-1}, p)$ and
            $u_i(o_i, p) = u^* \leq u_i(o_{i-1}, p)$;

        \end{itemize}
    \item outputs $o_n$;
\end{enumerate}
\caption{A group strategy-proof mechanism via $\mathcal{A}$} \label{alg:1}
\end{algorithm}

In other words, we find an outcome $o_n$ that brings down all agents' utilities to $u^*$, the maxmin value computed by  $\mathcal{A}$. 
\begin{proof}
First, let us analyze the time complexity of the mechanism. Denote $TIME(\mathcal{A})$ as the time complexity of algorithm
$\mathcal{A}$ and $TIME(F )$ as an upper bound of the time complexity to find such new outcome. So the time complexity
is $O\left( TIME(\mathcal{A}) + n TIME(F) \right)$.
The time of computing $o_i$ is often small enough, for example in our application, so that $\mathcal{M}$ has time complexity $O(TIME(\mathcal{A}))$.

Now we verify that $\mathcal{M}$ really outputs a feasible solution which optimizes $w(\cdot)$. The outcome $o_n$ is feasible
because in each step $o_i$ is feasible by the continuity of the environment.  Also, it is straightforward that $w(o_n) = w(o_0)$.

Finally, we prove that $\mathcal{M}$ is group-strategy-proof. Consider any input $s_1, s_2, \cdots, s_n, p$  and
everyone reports $s'_i$ where $s'_i \leq_i s_i$:

\begin{eqnarray*}
    &&u_i\left( \mathcal{M}(s_1, s_2, \cdots, s_i, \cdots, s_n, p), p \right) \\
    &=&\max_{o \in FEA(s_1, \cdots, s_i, \cdots, s_n, p)} w(o) \\
        &\geq& \max_{o \in FEA(s'_1, s_2, \cdots, s_i, \cdots, s_n, p)} w(o) \\
        &\,&  \vdots\\
        &\geq& \max_{o \in FEA(s'_1, s'_2,  \cdots, s'_i, \cdots, s'_n, p)} w(o) \\
        &=& u_i\left( \mathcal{M}(s'_1, s'_2, \cdots, s'_i, \cdots, s'_n, p) , p\right)
\end{eqnarray*}

So $\mathcal{M}$ is a group-strategy-proof mechanism that optimizes $w(o)$.
\end{proof}

Applying the same technique, one can produce a strategy-proof mechanism to optimize $w(o) = \min_i f_i(u_i(o, p))$ with
$n$ strictly increasing functions $f_1, f_2, \cdots, f_n$.  With this extension, we can optimize some interesting objectives subject to the {\em individual rationality} constraints. We now formally
define individual rationality.

\section{Max-min subject to individual rationality}


In this section, we consider the setting where if a player does not join the mechanism,
he will get a utility $r_i(p)$ that only depends on the public information $p$. In the previous setting, every
player's $r_i(p) = -\infty$; in other words, she always wants to join the mechanism in the previous setting.

\begin{definition}
    A mechanism is {\em individual rational} if on any inputs $s_1, s_2, \cdots, s_n, p$, the mechanism outputs $o$ such
    that for any $i \in \mathcal{N}$, $u_i(o, p) \geq r_i(p)$.
\end{definition}

\subsection{Replacing $\mathbf{u_i}$ by $\mathbf{f_i(u_i)}$  }

So the problem now is to design a strategy-proof and individual rational mechanism that achieves certain objective.
This can be achieved by the same method as Theorem~\ref{thm:1}.
\begin{theorem}
\label{thm:3}
    If there exists an algorithm $\mathcal{A}$ that computes $w(o) = \min_i f_i(u_i(o, p))$ for a continuous environment with
    strictly increasing functions $f_1, f_2, \cdots, f_n$, one can construct
     a strategy-proof mechanism $\mathcal{M}$ that computes $w(o)$.
\end{theorem}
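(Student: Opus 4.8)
The plan is to reduce Theorem~\ref{thm:3} to Theorem~\ref{thm:1} by folding the strictly increasing functions $f_i$ into the utilities. I define new utility functions $\tilde u_i(o,p) := f_i(u_i(o,p))$, so that the objective is simply the maxmin profile $w(o) = \min_i \tilde u_i(o,p)$ handled by Theorem~\ref{thm:1}. The mechanism $\mathcal M$ is then Mechanism~\ref{alg:1} applied verbatim to the profile $(\tilde u_1,\dots,\tilde u_n)$: run $\mathcal A$ to get $o_0$, set $w^* \gets \min_i \tilde u_i(o_0,p)$, and for $i=1,\dots,n$ find $o_i$ with $\tilde u_j(o_i,p)=\tilde u_j(o_{i-1},p)$ for $j\ne i$ and $\tilde u_i(o_i,p)=w^*$, finally outputting $o_n$.

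Before invoking Theorem~\ref{thm:1} I must verify that $(\tilde u_i)$ satisfies its hypotheses. Each $\tilde u_i=f_i\circ u_i$ is a map $\mathcal O\times\mathcal P\to\mathbb R$ depending only on the outcome and public information, so it is a legitimate utility. The empty-outcome property transfers: since $f_i$ is strictly increasing and $u_i(o^*,p)\le u_i(o,p)$ for every $o$, we get $\tilde u_i(o^*,p)\le\tilde u_i(o,p)$, so $o^*$ still minimizes each $\tilde u_i$. Continuity also transfers: writing $o_w=\arg\max_o w(o)$ for the new objective, any target $v\le\tilde u_i(o_w,p)$ corresponds under the bijection $f_i$ to a value $f_i^{-1}(v)\le u_i(o_w,p)$, and the continuity hypothesis supplies an outcome $o'$ with $u_i(o',p)=f_i^{-1}(v)$ (hence $\tilde u_i(o',p)=v$) and every other $u_j$ --- equivalently every other $\tilde u_j$ --- frozen at its $o_w$ value. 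As in the proof of Theorem~\ref{thm:gsp}, I would check by induction that each intermediate $o_i$ is itself a maximizer of $w$ (its minimum stays at $w^*$), so that continuity is legitimately applicable at every step and $o_n$ is feasible with $w(o_n)=w(o_0)$.

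The one genuinely new ingredient is reconciling the two notions of strategy-proofness, and here strict monotonicity does all the work. Theorem~\ref{thm:1} delivers a mechanism that is strategy-proof in the $\tilde u_i$, namely $\tilde u_i(\mathcal M(s),p)\ge\tilde u_i(\mathcal M(s'),p)$ whenever $s'_i\le_i s_i$; because $f_i$ is strictly increasing, this inequality holds if and only if $u_i(\mathcal M(s),p)\ge u_i(\mathcal M(s'),p)$, so strategy-proofness for $\tilde u_i$ is equivalent to strategy-proofness for the true utility $u_i$. Equivalently and more concretely, the output satisfies $\tilde u_i(o_n,p)=w^*=\max_{o\in FEA}\min_j\tilde u_j(o,p)$, so $u_i(o_n,p)=f_i^{-1}(w^*)$; any underreport $s'_i\le_i s_i$ shrinks the feasible set by resource monotonicity, hence can only lower $w^*$, and since $f_i^{-1}$ is strictly increasing it can only lower $u_i(o_n,p)$. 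The telescoping chain of inequalities from the proof of Theorem~\ref{thm:gsp} then even upgrades this to group-strategy-proofness.

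The step I expect to demand the most care is the transfer of continuity together with the inductive feasibility of the bring-down: I must confirm that the maximizer $o_w$ witnessing continuity is the optimizer of the new objective $\min_i f_i(u_i)$ (not of the old $\min_i u_i$), and that reparametrizing each coordinate by a strictly increasing $f_i$ neither changes which outcomes are feasible nor breaks the property that every intermediate $o_i$ remains a maximizer. Once continuity is seen to be invariant under composing utilities with strictly increasing functions, the rest is a mechanical relabeling of the arguments already established for Theorems~\ref{thm:1} and~\ref{thm:gsp}.
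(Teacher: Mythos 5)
Your proposal is correct and takes essentially the same route as the paper, which simply states that the reduction is identical to that of Theorem~\ref{thm:1} and omits the proof; you substitute $\tilde u_i = f_i\circ u_i$, rerun Mechanism~\ref{alg:1}, and observe that strict monotonicity of each $f_i$ makes strategy-proofness in $\tilde u_i$ equivalent to strategy-proofness in $u_i$. Your explicit verification that the empty-outcome and continuity hypotheses transfer under the reparametrization is a useful filling-in of details the paper leaves implicit, but it is not a different argument.
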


The reduction is identical to that in theorem \ref{thm:1} thus the proof is omitted. 
%

By this theorem, we can design individual rational and strategy-proof mechanism that optimizes the following:
\begin{corollary}
    Let $f_i(x) = x - r_i(p)$, one obtains an individual rational and strategy-proof mechanism that optimizes the minimal utility gain $f_i$.
\end{corollary}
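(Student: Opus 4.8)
The plan is to invoke Theorem~\ref{thm:3} directly by exhibiting the functions $f_i$ that encode the individual rationality constraint, so essentially no new machinery is required. First I would set $f_i(x) = x - r_i(p)$ for each agent $i$. Since $r_i(p)$ is a constant once $p$ is fixed, each $f_i$ is an affine function of $x$ with positive leading coefficient, hence strictly increasing; this is precisely the hypothesis Theorem~\ref{thm:3} demands. Assuming the existence of an algorithm $\mathcal{A}$ that optimizes $w(o) = \min_i f_i(u_i(o,p)) = \min_i \bigl(u_i(o,p) - r_i(p)\bigr)$, Theorem~\ref{thm:3} immediately yields a strategy-proof mechanism $\mathcal{M}$ computing the same objective, namely the minimal utility gain over the outside option.

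The remaining obligation is to verify that the resulting mechanism is individual rational, which is the one genuinely new claim beyond a bare citation of Theorem~\ref{thm:3}. Here I would argue as follows. The empty outcome $o^*$ is always feasible, and for each agent $i$ the outside-option guarantee should be reflected in the environment so that the value $w$ achieved by the optimal outcome $o$ is at least the value attainable by matching each agent to her outside utility $r_i(p)$. Concretely, one expects $\min_i\bigl(u_i(o,p) - r_i(p)\bigr) \ge 0$ at the optimum, because an outcome in which every agent receives exactly her reservation utility is (by assumption in this IR setting) feasible and gives objective value $0$. Since the mechanism $\mathcal{M}$ from Theorem~\ref{thm:3} preserves the optimal value $w$, its output $o$ satisfies $u_i(o,p) - r_i(p) = f_i(u_i(o,p)) \ge \min_j f_j(u_j(o,p)) = w(o) \ge 0$ for every $i$, which is exactly the statement $u_i(o,p) \ge r_i(p)$, i.e.\ individual rationality.

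I expect the main obstacle to be conceptual rather than technical: one must be careful that the environment in the IR setting genuinely admits the reservation profile as a feasible (or at least attainable-via-continuity) outcome, so that the optimal maxmin gain is nonnegative. If that background assumption fails, individual rationality cannot be guaranteed and the corollary would need an extra hypothesis. Granting the natural assumption that each agent can unilaterally secure $r_i(p)$ by declining to join, the nonnegativity of the optimum follows and the corollary is essentially a one-line specialization of Theorem~\ref{thm:3} combined with this nonnegativity observation; I would therefore keep the proof short, emphasizing the affine (hence strictly increasing) nature of $f_i$ and the chain of inequalities establishing $u_i(o,p)\ge r_i(p)$.
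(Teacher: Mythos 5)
Your proposal is correct and matches the paper's intent exactly: the paper states this corollary as an immediate consequence of Theorem~\ref{thm:3} with no written proof, and your instantiation $f_i(x)=x-r_i(p)$ (affine, hence strictly increasing) is precisely the intended specialization. Your additional verification that individual rationality follows from the nonnegativity of the optimal maxmin gain --- under the (implicit in the paper, and correctly flagged by you) assumption that the reservation utility profile is attainable --- fills in a step the paper leaves unstated, and is sound.
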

\begin{corollary}
    If $r_i(p) > 0$ for any $i \in \mathcal{N}$,
    let $f_i(x) = x / r_i(p)$, one obtains an individually rational and strategy-proof mechanism that optimizes minimal
    increasing rate.
\end{corollary}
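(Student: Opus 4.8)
The plan is to derive the corollary as a direct instantiation of Theorem~\ref{thm:3}, and then to argue the individual rationality clause separately, since that is the only part not handed to us for free.

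First I would check the hypothesis of Theorem~\ref{thm:3}. Because $r_i(p) > 0$ for every $i$, the map $f_i(x) = x / r_i(p)$ is a strictly increasing function of $x$, so the family $f_1, \dots, f_n$ meets the theorem's requirements. Applying Theorem~\ref{thm:3} to the objective $w(o) = \min_i f_i(u_i(o,p)) = \min_i u_i(o,p)/r_i(p)$ then yields a strategy-proof mechanism $\mathcal{M}$ whose output optimizes exactly this quantity. Since $u_i(o,p)/r_i(p)$ is precisely agent $i$'s increasing rate of obtained utility over her outside utility, $\mathcal{M}$ optimizes the minimal increasing rate and is strategy-proof; this settles two of the three claims.

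It remains to establish individual rationality, i.e.\ that the output $o$ satisfies $u_i(o,p) \geq r_i(p)$ for all $i$. Rewriting, this is equivalent to $\min_i u_i(o,p)/r_i(p) \geq 1$, that is, to the statement that the optimal maxmin value $w(o)$ is at least $1$. I would argue this by exhibiting a single feasible outcome that already attains value at least $1$: in the IR setting each agent may decline the mechanism and secure her outside utility $r_i(p)$, so the opt-out outcome $o_{IR}$ in which every agent obtains at least $r_i(p)$ is feasible and satisfies $f_i(u_i(o_{IR},p)) \geq 1$ for every $i$. Because $\mathcal{M}$ maximizes $\min_i f_i(u_i(\cdot,p))$, its output can only do at least as well, so $w(o) \geq 1$ and IR follows.

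The step I expect to be the real obstacle is not the algebra but pinning down the feasibility of the simultaneous opt-out outcome $o_{IR}$: the argument tacitly requires that all agents can secure their reservation utilities at once within one feasible allocation, not merely one agent at a time. In environments where outside options interfere with one another this would need an explicit assumption, which I would make precise (or verify directly in the concrete routing application, where $o_{IR}$ corresponds to every node using its direct route to the server). Granting such an assumption, the remaining verification is immediate from the optimality of $\mathcal{M}$.
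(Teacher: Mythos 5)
Your proposal is correct and follows essentially the same route as the paper, which offers no separate proof and treats the corollary as an immediate instantiation of Theorem~\ref{thm:3} with the strictly increasing maps $f_i(x)=x/r_i(p)$. Your additional paragraph on individual rationality --- observing that IR amounts to the optimal value $\min_i u_i(o,p)/r_i(p)$ being at least $1$, which in turn needs a single feasible outcome in which \emph{all} agents simultaneously secure their reservation utilities --- correctly makes explicit an assumption the paper leaves implicit (and which does hold in the routing application, where that outcome is everyone using her direct route), so it is a sound refinement rather than a deviation.
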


\subsection{Optimizing $\min_i u_i(o, p)$ subject to IR}
A further question is, can we optimize $\min_i u_i(o, p)$ instead of $w(o) = \min_i f_i(u_i(o, p))$, subject to IR? We answer this affirmatively by modifying the previous method to produce an IR and SP mechanism that optimizes  $\min_i u_i(o, p)$.

\begin{theorem}
    \label{ir+sp}
    If there exists an algorithm $\mathcal{A}$ that computes $w(o) = \min_i u_i(o, p)$
    for any continuous, resource monotone environment,
    such that $u_i(o, p) \geq r_i(p)$,
    then we can construct a strategy-proof and individual rational mechanism $\mathcal{M}$ that computes $w(o)$.
\end{theorem}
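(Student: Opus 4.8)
The plan is to reuse the architecture of Algorithm~\ref{alg:1} but to change the common target utility level. Recall that the original mechanism lowers every agent down to the maxmin value $u^*$ computed by $\mathcal{A}$. Under the IR requirement this is no longer admissible: an agent $i$ with $r_i(p) > u^*$ would be pushed below her reservation utility. I would therefore replace the target $u^*$ in the inner loop by $\max\{u^*, r_i(p)\}$; that is, on inputs $(s_1,\dots,s_n,p)$ run $\mathcal{A}$ to obtain $o_0$, set $u^* \gets \min_i u_i(o_0,p)$, and for $i=1,\dots,n$ find $o_i$ with $u_j(o_i,p)=u_j(o_{i-1},p)$ for all $j\neq i$ and $u_i(o_i,p)=\max\{u^*, r_i(p)\}$, finally outputting $o_n$.

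The first thing to verify is feasibility and correctness. At step $i$ the utility of agent $i$ is still $u_i(o_0,p)$, and since $\mathcal{A}$ returns an IR outcome we have $u_i(o_0,p)\geq r_i(p)$, while $u_i(o_0,p)\geq u^*$ because $u^*$ is the minimum; hence the target satisfies $\max\{u^*,r_i(p)\}\leq u_i(o_0,p)$ and continuity supplies a feasible $o_i$. To see that $\mathcal{M}$ still attains the optimum, note that $u_i(o_n,p)=\max\{u^*,r_i(p)\}\geq u^*$ for every $i$, and for the agent $k$ attaining the minimum in $o_0$ we have $u^*=u_k(o_0,p)\geq r_k(p)$, so $\max\{u^*,r_k(p)\}=u^*$. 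Therefore $\min_i u_i(o_n,p)=u^*=w(o_0)$, and every $u_i(o_n,p)\geq r_i(p)$, so $\mathcal{M}$ is individually rational and computes the optimal IR-constrained maxmin value.

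The incentive analysis is where one must be careful, but it goes through cleanly. Write $u^*(s)=\max_{o}\min_j u_j(o,p)$ for the optimum over all feasible, IR outcomes given reports $s=(s_1,\dots,s_n)$. The utility delivered to agent $i$ is exactly $\max\{u^*(s), r_i(p)\}$, which is a non-decreasing function of $u^*(s)$ because $r_i(p)$ does not depend on the reports. By resource monotonicity, any joint underreport $s'_i\leq_i s_i$ shrinks $FEA$, while the IR constraints depend only on $p$ and are unaffected; hence the maximum of $\min_j u_j$ over feasible IR outcomes can only drop, i.e.\ $u^*(s')\leq u^*(s)$. Applying this coordinate by coordinate gives the telescoping chain
$$u^*(s)\;\geq\;u^*(s'_1,s_2,\dots,s_n)\;\geq\;\cdots\;\geq\;u^*(s'_1,\dots,s'_n),$$
and composing with the monotone map $x\mapsto\max\{x,r_i(p)\}$ yields $u_i(\mathcal{M}(s,p),p)\geq u_i(\mathcal{M}(s',p),p)$ for every $i$, which is group strategy-proofness and hence strategy-proofness.

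The main obstacle is not the incentive step, which is a one-line monotonicity argument once the right target level is identified, but pinning down that correct target: it must be large enough to respect IR yet small enough to (a) remain feasible via continuity and (b) leave the overall minimum equal to $u^*$. The crucial observation making both work is that the agent realizing the minimum in $o_0$ already has reservation utility at most $u^*$, so clamping at $\max\{u^*,r_i(p)\}$ never raises the minimum above $u^*$ and never asks an agent for a utility exceeding what $o_0$ already provides.
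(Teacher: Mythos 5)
Your proposal is correct and matches the paper's intended construction: the paper omits an explicit general proof of this theorem, but its concrete instantiation (the minmax network mechanism, which sets $x_i' = b_{i,d_i}$ exactly when the direct route beats the maxmin delay) is precisely your clamping of each agent's utility at $\max\{u^*, r_i(p)\}$, and your incentive argument is the same telescoping monotonicity chain used for Theorem~\ref{thm:gsp}. The one observation you supply that the paper leaves implicit --- that the agent attaining the minimum in $o_0$ has $r_k(p)\leq u^*$, so the clamp preserves the maxmin value --- is exactly the point needed to make the reduction go through.
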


\section{Pareto efficiency}

In this section, we discuss the objective of implementing a {\em serially optimal} outcome.
We still use the same setting except that we now restrict agent's $i$ type space $\mathcal{S}_i$ to $ \mathbb{R}$.

\begin{definition}
    An outcome $o$ is called serially optimal under $s_1, \cdots, s_n, p$ if the utility profile $(u_1(o, p), \cdots, u_n(o, p))$ is the
    lexicographically largest one among all possible utility profiles; formally, for any $o' \in FEA(s_1, \cdots, s_n, p)$,
    either the utility profile of $o$ equals to that of $o'$ or there exists some $j \in \mathcal{N}$ such that
    for all $k < j$, $u_k(o, p) = u_k(o', p)$ and $u_j(o, p) > u_j(o', p)$.
\end{definition}

Clearly, serial optimality implies Pareto efficiency, so we focus on the former.
Our problem now becomes, given any list of input, compute a strategy-proof and serially optimal outcome.  Before we start, we need the following property.

\begin{definition}
    The utility functions are {\em monotone} if 
    for any $i \in \mathcal{N}$, any inputs $s_1, \cdots, s_n, p$
    and $0 < \delta \leq u_i(o,p) - r_i(p)$,
    there exists an $\varepsilon > 0$ such that for any $0 < \varepsilon' < \varepsilon$, there exists an outcome $o' \in FEA(s_1, \cdots, s_i - \varepsilon', \cdots, s_n, p)$ such that for any $j\ne i$, $u_j(o', p) = u_j(o, p)$ and $u_i(o', p) \geq u_i(o, p) - \delta$.
\end{definition}

\begin{theorem}
\label{pe+sp}
If there exists an algorithm $\mathcal{A}$ that serially optimizes $w(o) = (u_1(o, p), u_2(o, p), \cdots, u_n(o, p))$ for monotone
utility functions, resource monotone environment and $u_i(o, p) \geq r_i(p) $, 
and for each player $i$, her utility, which is a function of $s_i$, $u_i\left( \mathcal{A}(s_1 \dots s_i \dots s_n, p) , p\right)$ is
continuous, 
the algorithm itself is an IR and SP mechanism that computes a serially optimal outcome.
\end{theorem}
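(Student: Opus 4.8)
The plan is to verify the three asserted properties in turn, with individual rationality and serial optimality being essentially immediate and strategy-proofness carrying all the weight. Individual rationality holds because, by hypothesis, every feasible outcome already satisfies $u_i(o,p)\ge r_i(p)$, and $\mathcal{A}$ returns a feasible outcome; serial optimality is the defining property of $\mathcal{A}$, and it implies Pareto efficiency as already noted. For strategy-proofness, fix an agent $i$, fix the reports of everyone else together with $p$, and write $g(t):=u_i\!\left(\mathcal{A}(\ldots,t,\ldots),p\right)$ for agent $i$'s realized utility as a function of her own report $t\in\mathbb{R}$. Since an underreport is any announcement $t'\le t$, the strategy-proofness inequality is exactly the statement that $g$ is non-decreasing, and that is what I would prove.

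First I would establish a \emph{local} statement: at every report $t$ with $g(t)>r_i(p)$ there is a left-neighborhood on which $g$ does not exceed $g(t)$. Let $o^{t}=\mathcal{A}(\ldots,t,\ldots)$ and set $a_k:=u_k(o^{t},p)$, so that $(a_1,\dots,a_n)$ is the lexicographically largest feasible profile at report $t$. Applying the monotone-utility property to the outcome $o^{t}$, with any fixed $0<\delta\le a_i-r_i(p)$, produces an $\varepsilon>0$ so that for every $0<\varepsilon'<\varepsilon$ there is an outcome $o'\in FEA(\ldots,t-\varepsilon',\ldots)$ with $u_j(o',p)=a_j$ for all $j\ne i$ and $u_i(o',p)\ge a_i-\delta$. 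I would then show by induction on $k=1,\dots,i-1$ that the serially optimal utilities at report $t-\varepsilon'$ again equal $a_k$: the witness $o'$ attains $a_k$ while matching the already-fixed earlier coordinates, giving the lower bound, while the inclusion $FEA(\ldots,t-\varepsilon',\ldots)\subseteq FEA(\ldots,t,\ldots)$ from resource monotonicity, together with lexicographic optimality at $t$, gives the matching upper bound. Hence the lex-optimal profile at $t-\varepsilon'$ agrees with $(a_1,\dots,a_n)$ in its first $i-1$ coordinates; being also lexicographically at most $(a_1,\dots,a_n)$, its $i$-th coordinate is at most $a_i$, i.e. $g(t-\varepsilon')\le g(t)$.

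Next I would globalize this into monotonicity of $g$ using the assumed continuity of $g$ in $t$. Suppose toward a contradiction that $g(x)>g(y)$ for some $x<y$; individual rationality gives $g(y)\ge r_i(p)$, whence $g(x)>g(y)\ge r_i(p)$. Pick a level $d$ with $g(y)<d<g(x)$, so in particular $d>r_i(p)$, and let $t_0:=\inf\{t\in[x,y]:g(t)\le d\}$. Continuity makes $t_0$ well defined with $g(t_0)=d>r_i(p)$, while $g(\tau)>d=g(t_0)$ for every $\tau\in[x,t_0)$; this directly contradicts the local left-neighborhood statement at $t_0$. Therefore $g$ is non-decreasing, every underreport is weakly dominated by the truth, and $\mathcal{A}$ is strategy-proof.

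The step I expect to be the crux is the induction establishing that the higher-priority agents' serially optimal utilities are \emph{unchanged} by a small shaving of agent $i$'s report: this is precisely what the monotone-utility assumption purchases, and it is what prevents agent $i$ from profiting by loosening an earlier, binding constraint through misreport. I note that the individual-rationality floor causes no real trouble in the globalization step, because choosing $d$ strictly between $g(y)$ and $g(x)$ automatically keeps $d>r_i(p)$, so the local property is available exactly where it is invoked; the continuity hypothesis is used only to locate the crossing point $t_0$ and to pass from the per-point local inequalities to a global one.
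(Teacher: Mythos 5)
Your proof is correct and follows essentially the same route as the paper's: a local lemma showing that a small underreport cannot raise agent $i$'s serially optimal utility (using the monotone-utility witness together with resource monotonicity and lexicographic optimality), followed by a globalization via the assumed continuity of $u_i(\mathcal{A}(\cdot),p)$ in $s_i$. Your crossing-point argument with a level $d$ strictly above $r_i(p)$ is in fact a cleaner execution of the globalization than the paper's case split on whether the utility sits at the IR floor, but the underlying idea is the same.
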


\begin{proof}

\begin{figure}[htbp]
\centerline{\includegraphics[scale=0.8]{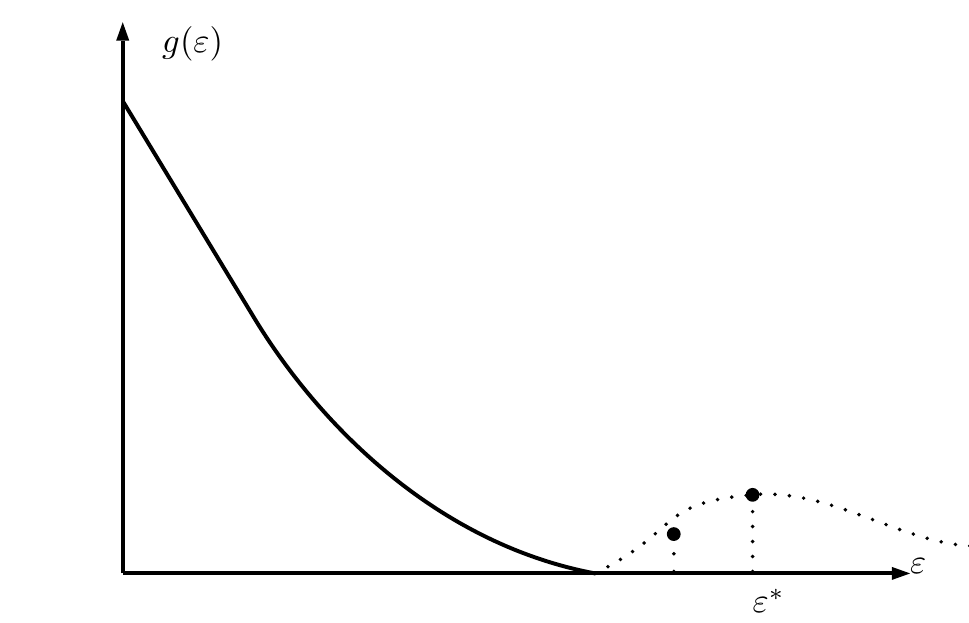}}
\caption{A figure for our proof}
\label{fig:one}
\end{figure}

    We denote $x_i(\varepsilon) = u_i\left(\mathcal{A}(s_1, \cdots, s_i - \varepsilon, \cdots, s_n, p), p\right)$ for $0 \leq \varepsilon \leq s_i$,
    for fixed inputs $s_1, \cdots, s_n, p$ (we call it (1)). $x_i(\varepsilon)$ is a continuous function of $s_i$. 
    First we prove that there exists an $\varepsilon > 0$,
    $\mathcal{A}$ outputs an outcome $o'$ such that $u_i(o', p) \leq u_i(o, p)$ under
    $s_1, \cdots, s_i - \varepsilon', \cdots, s_n, p$ (we call it (2)) for  any $0 < \varepsilon' \leq \varepsilon$.

    We prove it by contradiction. An important observation is that any feasible solution under (2) is also feasible under (1) by the resource monotone environment condition,
    in other words, $$FEA(s_1, \cdots, s_i - \varepsilon', \cdots, s_n, p) \subseteq FEA(s_1, \cdots, s_i, \cdots, s_n, p).$$

    If $\left(u_j(o', p)\right)_{j=1}^n$ has the property that $u_i(o', p) > u_i(o, p)$,
    then there must exist $k <i$ such that $u_j(o', p) = u_j(o, p)$ for all $j < k$ and $u_k(o', p) < u_k(o, p)$
    (because a solution in (2) is a solution in (1) and $\left(u_j(o, p)\right)_{j=1}^n$ is the serially optimal). However we can construct a larger solution
    for (2): by the monotone utility function condition, we know that there exists an outcome $o''$ such that $u_j(o'', p)= u_j(o, p)$ for all $j\ne i$ and
    $u_j(o'', p) > u_i(o, p) - \delta$ for some $\delta$ and $\varepsilon$ which is a contradiction that the algorithm find a serially optimal solution.

    Next we prove that $u_i(\varepsilon)$ is a decreasing function. Let $g_i(\varepsilon) = x_i(\varepsilon) - r_i(p)$. Because the algorithm is IR,
    $g_i(\varepsilon) \geq 0$ for all feasible $\varepsilon$.   For any $\varepsilon$ such that $g_i(\varepsilon) > 0$, there exists an $\delta > 0$ such that
    $g_i(\varepsilon) \geq g_i(\varepsilon')$ for any $\varepsilon \leq \varepsilon' \leq \varepsilon + \delta$ (by the monotone utility function) so it is
    locally monotonically decreasing. When $g_i(\varepsilon) = 0$, for all feasible $\varepsilon' > \varepsilon$, $g_i(\varepsilon') = 0$. If $g_i(\varepsilon') > 0$, we can find a
    first extreme point $\varepsilon^* \in (\varepsilon, \varepsilon']$. Then for any small $\delta > 0$, $g(\varepsilon^* - \delta) < g_i(\varepsilon^*)$ which conflicts with
    the utility function monotone.

    So $g_i(\varepsilon)$ is a monotonically decreasing function when $g_i(\varepsilon) > 0$. And if $g_i(\varepsilon^*) = 0$, then $g_i(\epsilon')$ for $\epsilon' > \epsilon^*$ will always be 0 .
    We can conclude that $g_i(\cdot)$ is a monotonically decreasing function. Since $x_i(\cdot) = g_i(\cdot) + r_i(p)$, $x_i(\cdot)$ is also a monotonically decreasing
    function which means player $i$ will never get more profit by reporting a lower $s'_i \leq s_i$.
\end{proof}
The theorem states that for monotone utility functions and resource environment, mechanism design is as easy as algorithm design.

\subsection*{Serial optimization by contribution}

In order to do serial optimization, one needs to pre-specify a (static) ordering on $\mathcal{N}$, resulting in unfairness for agents who have low rank. In this section, we show that, without sacrificing strategyproofness, this ordering can be extended to depend on the input. In particular, this ordering can be consistent with the ranking of agents' contributions: the more one contributes, the higher priority she gets.

Formally, define ``order envy-freeness'' as follows:
\begin{definition}
    Consider the class of serially optimized mechanisms, a mechanism is order envy-free (OEF) with respect to
    $\{l_i\}_{i=1}^n$ ($l_i$ is a weight for agent $i$)
    if for any two agents $i, j \in \mathcal{N}$, such that $l_i s_i > l_j s_j$,  $i$ is ranked before $j$ in the serial ordering.
\end{definition}

Order envy-freeness states that an agent gets a higher priority for optimization than the ones with lower contributions. In particular, if set $l_i=1 ~\forall i$, the order of optimization is same as the ranking on reports.

\begin{theorem}
\label{thm:oef}
\label{oef}
    If there exists an algorithm $\mathcal{A}$ that serially optimizes $w(o, q) = (u_{q_1}(o, p), u_{q_2}(o, p), \cdots, u_{q_n}(o, p))$
     ( where $q$ is an index on agents such that $l_{q_i} s_{q_i} \geq l_{q_j} s_{q_j}$ for any $i<j$) for monotone utility functions, resource monotone environment and
    $u_i(o, p) \geq r_i(p)$, then the algorithm itself is an IR, SP and OEF (w.r.t $\{l_j\}_{j=1}^n$) mechanism that compute a serially
    optimal outcome.
\end{theorem}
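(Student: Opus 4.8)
The plan is to reduce Theorem~\ref{thm:oef} to Theorem~\ref{pe+sp} by treating the contribution-dependent ordering as a relabeling of the agents. The key observation is that Theorem~\ref{pe+sp} already establishes that if we serially optimize along a \emph{fixed} ordering of agents, the resulting algorithm is IR and SP. The only new feature here is that the ordering $q$ is determined by the reported values $l_{q_i}s_{q_i}$, so I would need to argue that this data-dependence does not break strategyproofness, and separately that OEF is satisfied essentially by definition.

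First I would observe that OEF holds trivially: by construction $q$ is an index permutation with $l_{q_i}s_{q_i}\geq l_{q_j}s_{q_j}$ for $i<j$, so whenever $l_i s_i > l_j s_j$, agent $i$ precedes agent $j$ in the serial ordering. That disposes of the order envy-freeness claim and of the fact that $\mathcal{A}$ computes a serially optimal outcome (which is inherited from the hypothesis on $\mathcal{A}$). Second, for the strategyproofness direction, I would fix all reports except agent $i$'s and consider what happens as agent $i$ underreports from $s_i$ to $s_i'\leq s_i$. The subtlety is that lowering $s_i$ to $s_i'$ can lower $l_i s_i$ below some $l_j s_j$, causing agent $i$ to drop to a later position in the ordering $q$. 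So the analysis splits into two regimes of the underreport.

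The hard part will be handling the range of $s_i'$ where agent $i$'s \emph{rank} stays fixed versus where it strictly drops. Within any maximal interval of $s_i'$ on which the induced ordering $q$ is constant, the mechanism is exactly a fixed-order serial optimizer of the form covered by Theorem~\ref{pe+sp}, so the argument there shows $u_i$ is (weakly) monotonically decreasing in $\varepsilon=s_i-s_i'$ on that interval. The remaining obstacle is to rule out that agent $i$ gains utility by crossing a rank boundary, i.e.\ I must show $x_i$ does not jump up at the finitely many breakpoints where $l_i s_i'$ equals some $l_j s_j$. I would argue that at such a breakpoint the ordering changes only by moving agent $i$ strictly later; since serial optimization lexicographically favors earlier agents, moving $i$ to a later priority can only (weakly) decrease $i$'s optimized utility, consistent with the continuity of $u_i$ as a function of $s_i$ assumed in the hypothesis (inherited from Theorem~\ref{pe+sp}). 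Concatenating the monotone-decreasing behavior on each constant-rank interval with the non-increasing jumps at the breakpoints, I would conclude that $x_i(\varepsilon)$ is globally non-increasing, so underreporting never helps, giving SP; IR is immediate from the hypothesis $u_i(o,p)\geq r_i(p)$ holding for $\mathcal{A}$ at every report.

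Thus the proof is essentially an application of Theorem~\ref{pe+sp} \emph{piecewise} in the report $s_i'$, glued together across the rank-change breakpoints using the fact that demotion in the serial order is never beneficial. I expect the monotonicity-across-breakpoints step to be the only place requiring genuine care, since everything within a fixed rank reduces directly to the already-proved result.
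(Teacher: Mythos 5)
Your proposal is correct, and it is in fact more complete than what the paper provides: the paper dismisses this theorem with the single remark that ``the proof follows a similar argument to the static case, which we do not repeat,'' without addressing the one genuinely new difficulty, namely that the serial ordering now depends on agent $i$'s own report. Your decomposition --- apply the Theorem~\ref{pe+sp} monotonicity argument on each maximal interval of reports where the induced ordering is constant, then control the finitely many rank-change breakpoints --- is exactly the missing content. The key new ingredient is your demotion observation: when $s_i'$ crosses a value $l_j s_j/l_i$, the ordering changes by an adjacent transposition moving $i$ later, and since $i$'s serially optimized utility in the later position is a maximum over a strictly more constrained set (the additional constraint $u_j = u_j^*$ is imposed before $u_i$ is optimized), the jump in $x_i(\varepsilon)$ at the breakpoint is non-positive; gluing this to the within-interval monotonicity (using the one-sided continuity of $x_i$ on each piece) yields global monotonicity and hence SP. The only point deserving a little extra care, which you correctly flag, is handling simultaneous ties (several $l_j s_j$ coinciding), but this follows by iterating the adjacent-transposition argument. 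OEF and IR are, as you say, immediate from the construction and the hypothesis on $\mathcal{A}$.
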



To achieve order envy-freeness, we define a slightly different reduction as follows. Given reported $\{s_i\}$,
we generate an ordering $q_i$ such that for any $i$, $l_{q_i} s_{q_i} > l_{q_{i+1}} s_{q_{i+1}}$ and optimizes each $x_i$ according to
the order $\{q_i\}$.  In other words, we define the order according to their contributions: the more you share, the earlier you get served.
Technically speaking, this modified mechanism does not belong to the class of mechanisms we have proposed since the ordering of optimization now explicitly
depends on the report. However, it is easy to check that this modification does not affect IR, PE, SP and guarantee OEF as a plus. The mechanism is listed below. The proof follows a similar argument to the static case, which we do not repeat.

\begin{algorithm}
\begin{enumerate}
\renewcommand{\labelitemi}{}
    \item on inputs $s_1, s_2,\cdots, s_n, p$, and $l_1, \cdots, l_n$,
    \item compute the order $q_i$ by any sorting algorithm,
    \item run $\mathcal{A}(s_{q_1}, \cdots, s_{q_n}, p, q)$ to get the outcome $o$;
    \item output $o$
\end{enumerate}
\caption{An IR, SP, PE and OEF mechanism}
\end{algorithm}

\section{Application: Network route allocation}

Starting from this section, we show the generality of our methodology by applying it to two realistic scenarios. The two applications are highly nontrivial and valuable on their own rights.


We first consider an interesting route allocation problem in multiple-commodity networks. For the network structures under consideration, there are several vertices known as users. Each user demands a file of certain size stored on a server. The users form a directed graph. Each user has a capacity, which denotes the maximum (traffic) flow that can go through that vertex. We assume this vertex capacity is private information of the user. Between each user and each server, there is an arc constrained by certain capacity, denoting the maximal flow that can go through the arc. Given such a network, a user can download her target file via any route to the destination server\footnote{For each user, there exists a direct route via which the user can down its file without even join the mechanism. By joining the mechanism, however, the user can download the file via any indirect route that pass through other uses who also join the mechanism by sharing their bandwidths. This outside option imposes an IR constraint for the design problem.}. Given the reported vertex capacities, a route allocation mechanism allocates a route (or multiple routes, both of which we consider) and feasible flow within the route for each user.

It is not hard to see that our formulation encompasses route allocation problems in both client-server based and peer-to-peer based (simply treat a server in our model as a peer that never downloads) networks. In practice, the same route allocation problem has been witnessed by Xun-You Inc.\footnote{\url{http://www.xunyou.com/}}, an online gaming platform that aims to resolve the congestions on networks consisting of subnetworks by multiple Internet Service Providers. The same problems have also been witnessed by route optimization among multiple express companies, each of which specializes in some geographic region.

 A user's utility is the negation of its time delay, given by $$-\frac{\mbox{file size}}{\mbox{allocated flow}}.$$ This notion of utility is widely used in the evaluation of performance in a network system or an operating system. More importantly, it does not depend on the private information of users, i.e., the private capacity of each vertex, allowing our framework to be applicable.

We will use the above reduction theorems to design a strategy-proof mechanism that achieves the \emph{minmax} objective (see~\cite{Nisan99}), i.e., to minimize the maximum delay among all users. We also consider the objective of Pareto efficiency.

\subsection{Setting}

Formally, in our setting, there are $n$ users $ P = \{p_1, p_2, \cdots , $ $p_n\}$ and $m$ servers
$Q = \{q_1, q_2,\cdots , q_m\}$ ($p_i$ is called user $i$ and $q_j$ is called server $j$).
For a user $p_i$, her desired file is stored on server $d_i$, i.e, $q_{d_i}$.
The size of the file is $c_{i}$ units. If the amount of
her available bandwidth\footnote{We use ``bandwidth'' and ``flow'' interchangeably.} is $x$ ($x$ units of flow per second), she can finish her task in $c_i / x$.

The connections between users form a directed graph $G = (V , E)$,
where $V$ is the set of vertex in the graph, $E$ is the set of edges.
For a vertex in $V$, it denotes a user in $P$.
A pair of users can connect with each other.
For an edge $\langle u, v \rangle \in E$, we assume there is not limitation in the edge
$\langle u, v \rangle$, in other words, arbitrarily large flow can pass through this edge. This is with out loss of generality --- all results carry over to the setting where each edge has a capacity.

Also, each user $p_i$ can connect to a subset of servers. For a pair of user $p_i$
and server $q_j$, there is an edge between them with bandwidth $b_{i, j}$. That is, this edge can transmit at most $b_{i, j}$ units of flow per second. When
$b_{i, j} = 0$, there is no connection between $p_i$ and $q_j$.

Each user has a bandwidth limitation locally. The total
amount of flow she can download and share with others per second is at most $v_i$.
In reality, $v_i$ is the limitation that can depend on user $i$'s hardware or software.
A user $p_i$ can download her own file either by the edge directly between $p_i$ and $q_{d_i}$ or
indirectly from some $p_j$'s where $p_j$ downloads it from the edge $p_j$ and $q_{d_i}$.

\begin{figure}[!t]
\centering
\includegraphics[scale=0.8]{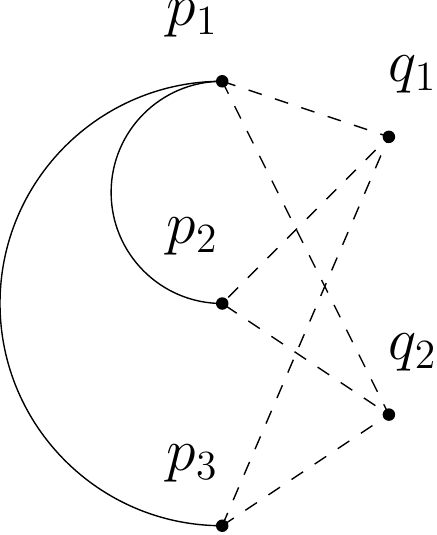}
\caption{A sample model, $d_1 = d_2 = 1$, $d_3 = 2$}\label{fig1}
\end{figure}

Consider the illustrating example in Figure \ref{fig1}.
The left nodes are users and the right nodes are servers. The solid lines denote connections between users and dashed lines denote connections between users and servers.
In this example, user $p_3$ could download its file through the edge $\langle p_3, q_2 \rangle$, through the path
$\langle p_3, p_1\rangle, \langle p_1, q_2 \rangle $ or the path $\langle p_3, p_1 \rangle ,
\langle p_1, p_2\rangle,$ $ \langle p_2, q_2\rangle$.

We assume that $G, \{b_{i, j}\}_{  \begin{subarray}{c} i=1\dots n \\ j=1 \cdots m\end{subarray}    }$ , $\{(d_i, c_i)\}_{i=1\cdots n}$ are public information that is shared by all users and the mechanism designer, while
$\{v_i\}_{i=1}^n$ is private information known only to each user $i$.
Given a reported profile of $(v_1, v_2, \cdots, v_n)$, a mechanism returns a flow assignment $x_i$ for each $i$ (a multiple-route assignment).
%
%
%

Thus, we can define the environment $\left\{\mathcal{N},\mathcal{S},\mathcal{P},\mathcal{O},u\right\}$ of this problem:
\begin{itemize}
    \item $\mathcal{N}$ denotes the set of $n$ agents,
    \item $\mathcal{S} = \mathcal{S}_1 \times \mathcal{S}_2 \times \cdots \mathcal{S}_n$ each $\mathcal{S}_i = \mathbb{R}$
    represents the local bandwidth limitation $v_i$,
    \item $\mathcal{P}$ represents servers $Q$, the whole graph structures $G = (V, E)$, $\{(d_i, c_i)\}_{i=1\cdots n}$  and all information about capacities on edges,
    \item $\mathcal{O}$ is the set of all the possible assignments,
    \item $u_i(o, p) = - \frac{c_i}{x_i}$ where $x_i$ is $i$'s total amount of flow in the assignment $o$.
\end{itemize}

Also we need to define individual rationality $r_i(\cdot)$:
for any agent $i \in \mathcal{N}$, the flow she receives by a mechanism is greater than or equal to
her direct route bandwidth $b_{i, d_i}$ if she does not join the mechanism. Formally, $r_i(p) =  b_{i, d_i}$.

\subsection{Max-Min objective subject to individual rationality}

\begin{lemma}
    The above environment is resource monotone and continuous.
\end{lemma}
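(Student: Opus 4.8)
The plan is to check the two properties directly against their definitions, taking the order $\leq_i$ on each $\mathcal{S}_i = \mathbb{R}$ to be the usual order on the reals, so that $v'_i \leq_i v_i$ simply means $v'_i \leq v_i$. Before either verification I would pin down exactly what makes a flow assignment $o$ feasible under a given input: flow conservation for each commodity (user $i$'s file being the commodity routed from the server $q_{d_i}$ to $p_i$), the arc bounds $\leq b_{k,j}$ on the server arcs (the user--user arcs being uncapacitated), and the node bound that the total flow handled at each $p_k$, i.e.\ its own download plus whatever it relays for others, is at most $v_k$.

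For resource monotonicity I would observe that the \emph{only} constraint mentioning $v_i$ is the node bound at $p_i$. Hence if $o \in FEA(s_1,\dots,v'_i,\dots,s_n,p)$, then the flow handled at $p_i$ under $o$ is $\leq v'_i \leq v_i$, while every other constraint is literally unchanged; therefore $o$ remains feasible when $v_i$ is used, giving $FEA(s_1,\dots,v'_i,\dots,s_n,p) \subseteq FEA(s_1,\dots,v_i,\dots,s_n,p)$, which is exactly Definition~\ref{resource-pro}.

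For continuity, recall that $u_i(o,p) = -c_i/x_i$ is a continuous, strictly increasing bijection from $x_i \in (0,\infty)$ onto $(-\infty,0)$. Fixing the maxmin-optimal outcome $o_w$ with per-user flows $x_k^{w}$ and a target $u'_i \leq u_i(o_w,p)$, I would set $x'_i := -c_i/u'_i \in (0, x_i^{w}]$ (the boundary value $x'_i = 0$ covering the degenerate case) and build $o'$ by taking the per-commodity decomposition of $o_w$ and scaling only commodity $i$ by the factor $\lambda := x'_i/x_i^{w} \in [0,1]$, leaving all other commodities untouched. Scaling a single-commodity flow by a constant preserves conservation, so $o'$ delivers exactly $x'_i$ to $p_i$ and still delivers $x_k^{w}$ to every $p_k$ with $k \neq i$; thus $u_i(o',p) = u'_i$ and $u_j(o',p) = u_j(o_w,p)$ for all $j \neq i$, matching the continuity definition. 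When $x_i^{w} = 0$ the only admissible target forces $o' = o_w$, so that case is trivial.

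The step I expect to require the most care is feasibility of $o'$, together with justifying the clean per-commodity decomposition it relies on. Feasibility itself is not hard once stated correctly: every capacity is an upper bound and $\lambda \leq 1$ only decreases flow, so each server arc carries no more than before and the total flow handled at every node $p_k$ weakly decreases, leaving all node bounds (including the unchanged ones for $k \neq i$) satisfied. The genuinely structural point is that distinct users' files are distinct commodities that interact only through the shared node and arc capacities; granting that the optimal flow decomposes into these commodities, the scaling argument is immediate and the lemma follows.
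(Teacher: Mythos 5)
Your proposal is correct and follows essentially the same route as the paper: monotonicity because the private capacity $v_i$ appears only as an upper bound in the node constraint at $p_i$ (so enlarging it can only enlarge the feasible set), and continuity by reducing user $i$'s delivered flow from $x_i$ to $x'_i = -c_i/u'_i$ while leaving all other users' flows untouched. The only cosmetic difference is that you realize the reduction by uniformly scaling commodity $i$'s flow, whereas the paper speaks of cancelling $x_i - x'_i$ units of that flow; both are valid ways of decreasing a single commodity, and decreasing flow can only help every capacity constraint.
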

\begin{proof}
For any feasible assignment in
$FEA(s_1, s_2, \cdots, s_n, p)$, it doesn't violate any constraint in $FEA(s_1, \cdots, s_i + \varepsilon, \cdots, s_n, p)$,
implying monotonicity.

For any $u'_i \leq u_i$, we can simply decrease $x_i$ to $x'_i = -c_i/u'_i \leq -c_i/u_i = x_i$
by cancelling $x_i - x'_i$ units of flow in the assignment.  So it is continuous.
\end{proof}

By theorem \ref{ir+sp}, we only need to give an algorithm to optimize $\min_i u_i(o)$. Because the multiple-commodity flow problem can be written
in linear program, so we can get the algorithm by binary search and linear programming:

\begin{algorithm}
\label{findD}
\begin{itemize}
\renewcommand{\labelitemi}{}
    \item binary search $D \in \left[0, \max_{i \in \mathcal{N}} \frac{c_i}{b_{i, d_i}} \right]$, if the following LP is feasible under
certain $D$, decrease $D$; otherwise, increase $D$ :
\item
\begin{itemize}
\renewcommand{\labelitemi}{}
\renewcommand{\labelitemii}{}
    \item  maximize : 0 (No objective here, i.e., a linear-feasibility program.)
    \item subject to
            \begin{enumerate}
                \item $x_i = x_{i, 1} + x_{i, 2} + \cdots + x_{i, n}$ for all $i = 1, \cdots, n$;
                \item $x_i = x_{i, i} + \sum_{e'} f_{i, e'}$ ($e'$ going into $i$) for all $i = 1, \cdots, n$;
                \item $\sum_{e''} f_{i, e''} = \sum_{e'} f_{i, e'} + x_{i, j}$ ($e'$ going into $j$ and $e''$ going out of $j$)
                for all $i\ne j$;
                \item $x_i \geq b_{i, d_i}$ for all $i$;
                \item $\mathbf{x_i\geq c_i/D}$ \textbf{for all} $\mathbf{i}$
                \item $\Sigma_jx_{j,i}+\sum_{e', j} f_{j, e'} \leq v_{i}$ ($e'$ going into of $i$) ;
                \item $\sum_{k} x_{k, i} \leq b_{i, j}$ (where $k$ have the following property: $d_k = j$), for all $i, j$;
                \item all variables are non-negative.
            \end{enumerate}
\end{itemize}
\end{itemize}
\caption{An algorithm to minimize the maximal cost}
\end{algorithm}

\begin{enumerate}
    \item In the first equations, $x_i$ denotes the final flow user $i$ can get,
and $x_{i, k}$ denotes the flow user $i$ gets from the edge $\langle p_k, q_{d_i} \rangle$. $x_{i, i}$ is the flow
on the direct edge from the user $i$ to the server.
    \item Let $f_{i, e}$ denote the final flow in edge $e$
    when transporting the file $c_i$ between $d_i$ and $p_{d_i}$. So, for the second equation,
$x_i$ is the sum of all $f_{i, e'}$ (ingoing flow) and $x_{i, i}$.
    \item The third one shows that in each vertex, the
total ingoing flow equals the total outgoing flow, which is known as the ``flow conservation'' constraints. Note that when $i = j$, the
equation does not make sense, since it is impossible for user $i$ to transport her target flow to others.
    \item The fourth constraints ensures individual rationality.

    \item The fifth means the upper bound of time each user needs is at most $D$.

    \item The sixth inequalities
$\sum_j x_{j, i} + \sum_{e', j} f_{j, e'} \leq v_{i}$ are the `capacity constraints' on vertices.

    \item The sixth inequalities $\sum_{k} x_{k, i} \leq b_{i, j}$ are the `capacity constraints' on edges.

    \item The final constraints ensures all variables being non-negative.
\end{enumerate}

With Algorithm~\ref{findD}, we obtain a SP, IR and Maxmin mechanism (Algorithm~\ref{maxmin_network}) by theorem \ref{ir+sp}:
\begin{algorithm}
\label{maxmin_network}
\caption{A strategy-proof minmax Mechanism}
\begin{enumerate}
    \item Binary search the minimized maximal downloading time $D^{\ast}$ among the solution of the LP (listed above)
    \item For all $p_i\in P$
    \begin{enumerate}
        \item if $\frac{c_i}{b_{id_i}}<D$, set $x_i'\gets b_{id_i}$
        \item else set $x_i' \gets \frac{c_i}{D}$
    \end{enumerate}
    \item For all $p_i\in P$, add constraint $x_i=x_i'$
    \item Solve the new LP to compute the assignment.
\end{enumerate}
\end{algorithm}

\subsection{Pareto efficiency subject to individual rationality}

By theorem \ref{pe+sp}, we know that if there exists an algorithm to find a serially optimal
outcome $o$ for monotone utility function, resource monotone environment and IR,
it is a strategy-proof mechanism.  Algorithm \ref{algo:1} gives the outcome we want.
Almost all constraints are the same as what we listed in the previous section except that
the seventh constraint ensures that the final solution $(u_1(o, p), \cdots , u_n(o, p))$ is the serially optimal one
among all the outcomes.

\begin{algorithm}
Given $G, \{b_{i, j}\}, \{(d_i, c_i)\}, \{v_i\}$, repeat the following procedure $n = |V|$ times:
\label{algo:1}
\begin{itemize}
\renewcommand{\labelitemi}{}
    \item  maximize : $x_i$ (the $i$-th time)
    \item subject to
            \begin{enumerate}
                \item $x_i = x_{i, 1} + x_{i, 2} + \cdots + x_{i, n}$ for all $i = 1, \cdots, n$;
                \item $x_i = x_{i, i} + \sum_{e'} f_{i, e'}$ ($e'$ going into $i$) for all $i = 1, \cdots, n$;
                \item $\sum_{e''} f_{i, e''} = \sum_{e'} f_{i, e'} + x_{i, j}$ ($e'$ going into $j$ and $e''$ going out of $j$)
                for all $i \ne j$;
                \item $x_i \geq b_{i, d_i}$ for all $i$;
                \item $\sum_j x_{j, i} + \sum_{e', j} f_{j, e'} \leq v_{i}$ ($e'$ going into  $i$) ;
                \item $\sum_{k} x_{k, i} \leq b_{i, j}$ (where $k$ have the following property: $d_k = j$), for all $i, j$;
                \item for all $j \leq i - 1$, set $x_i = x^*_i$ where $x^*_i$'s are the $x_i$'s optimized so far;
                \item all variables are non-negative;
            \end{enumerate}
\end{itemize}
\caption{An IR, PE and SP mechanism}
\end{algorithm}

We have already checked the resource monotone condition and
still need to check whether the utility function is monotone.

\begin{lemma}
\label{monotone-utility}
    $\{u_i\}_{i=1}^n$ are monotone utility functions under the environment.
\end{lemma}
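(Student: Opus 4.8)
The plan is to verify the monotone utility condition directly from the flow structure, exploiting how the private quantity $v_i$ enters feasibility. The crucial structural observation is that $v_i$ appears in exactly one constraint, the vertex-capacity inequality $\sum_j x_{j,i} + \sum_{e',j} f_{j,e'} \le v_i$ at vertex $i$; every other constraint (flow conservation, edge capacities, IR, non-negativity) is untouched when $v_i$ is lowered to $v_i - \varepsilon'$. Moreover, by the equation $x_i = x_{i,i} + \sum_{e'} f_{i,e'}$, the entire delivered flow of user $i$ is charged in full to the left-hand side of this inequality (the term $x_{i,i}$ sits in $\sum_j x_{j,i}$ and the term $\sum_{e'} f_{i,e'}$ sits in $\sum_{e',j} f_{j,e'}$). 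Hence reducing $x_i$ by any amount frees exactly that amount of vertex-$i$ capacity.

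Fixing $i$, inputs $(s_1,\dots,s_n,p)$, a feasible outcome $o$ with delivered flow $x_i$ (so $u_i(o,p) = -c_i/x_i$), and $0 < \delta \le u_i(o,p) - r_i(p)$, I would first pick the target flow $\hat x_i$ determined by $-c_i/\hat x_i = u_i(o,p) - \delta$ and set $\varepsilon = x_i - \hat x_i > 0$. Since $x \mapsto -c_i/x$ is continuous and increasing, every $\varepsilon'$ with $0 < \varepsilon' < \varepsilon$ satisfies $-c_i/(x_i - \varepsilon') > u_i(o,p) - \delta$. Because $\delta \le u_i(o,p) - r_i(p)$, we also get $u_i(o',p) \ge u_i(o,p) - \delta \ge r_i(p)$, i.e. $x_i - \varepsilon' > \hat x_i \ge b_{i,d_i}$, so the IR constraint $x_i \ge b_{i,d_i}$ continues to hold for the perturbed flow.

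I would then construct $o'$ by decomposing the commodity-$i$ flow of $o$ into source-to-$i$ paths and cancelling $\varepsilon'$ units of flow along them, leaving every other commodity's variables untouched. This keeps each delivered flow $x_j$ with $j \ne i$ fixed, hence $u_j(o',p) = u_j(o,p)$, while lowering $x_i$ to $x_i - \varepsilon'$, hence $u_i(o',p) = -c_i/(x_i-\varepsilon') \ge u_i(o,p) - \delta$. Feasibility of $o'$ under $v_i - \varepsilon'$ then follows point by point: the left-hand side of the vertex-$i$ capacity inequality drops by exactly $\varepsilon'$ (it carried $x_i$, now carries $x_i - \varepsilon'$), so it is $\le v_i - \varepsilon'$; every relay vertex lying on a cancelled path only loses through-flow, which can only relax its capacity inequality; flow conservation is preserved because whole paths are removed; and IR and non-negativity hold by the choice of $\varepsilon$ above.

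The delicate step, which I would spell out most carefully, is the bookkeeping that separates user $i$'s delivered flow from relay flow. One must argue that cancelling commodity-$i$ paths reduces $x_i$ and the vertex-$i$ usage in lockstep (the exact $\varepsilon'$-for-$\varepsilon'$ correspondence), while relay flow of commodity $i$ passing through some other vertex $k$ reduces only $k$'s capacity usage and never the delivered flow $x_k$, which is governed solely by commodity-$k$ variables. Establishing this exact correspondence, together with the elementary fact that decreasing any flow can never violate an upper-bound capacity constraint, is precisely what makes the single choice of $\varepsilon$ work uniformly for all smaller $\varepsilon'$ and completes the verification of monotonicity.
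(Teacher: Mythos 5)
Your proof is correct and shares the paper's core observation --- by constraints (2) and (6), user $i$'s entire delivered flow $x_i$ is charged against her own vertex capacity $v_i$, so cancelling $\varepsilon'$ units of commodity-$i$ flow frees exactly the $\varepsilon'$ of capacity that the deviation $v_i \mapsto v_i - \varepsilon'$ removes --- but you realize it by a genuinely different construction. The paper first proves an auxiliary normalization (Lemma~\ref{alluse}: one may assume the direct edge is fully used by $i$, $x_{i,i}=b_{i,d_i}$) and then cancels flow only on the direct edge $\langle p_i, q_{d_i}\rangle$, which touches no other vertex; you instead cancel $\varepsilon'$ units along a path decomposition of the whole commodity-$i$ flow, leaving all other commodities untouched. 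Your route bypasses Lemma~\ref{alluse} entirely and is in one respect more robust: the paper's update $x'_{i,i}=x_{i,i}-\varepsilon$ can drive that variable negative when $\varepsilon$ exceeds $b_{i,d_i}$ (i.e., for large $\delta$), whereas a path decomposition never cancels more flow than is present, so non-negativity is automatic; the price is the extra bookkeeping at relay vertices and for flow conservation along the cancelled paths, which you supply. One caveat: the paper's proof of this lemma also verifies, via parametric-LP results, that each $x_i(\varepsilon)$ is continuous. That continuity is a separate hypothesis of Theorem~\ref{pe+sp} rather than part of the definition of monotone utilities, so its absence is not a gap in the lemma as stated, but it would still need to be established before the lemma can be combined with Theorem~\ref{pe+sp}.
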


To prove the lemma \ref{monotone-utility},  we need to prove the following lemma.
\begin{lemma} \label{alluse}
    For any possible IR solution $\{x_1, x_2, \cdots, x_n\},\, $ $ \{x_{i,j}\}, \{f_{i,e}\}$, there exists an IR solution
    $\{x_1, \cdots, x_n\}, \{x'_{i, j}\}, $ $ \{f'_{i,e}\}$ with $x'_{i, i} = b_{i, d_i}$.
    \end{lemma}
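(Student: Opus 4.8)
The plan is to prove the lemma by a flow-rerouting (exchange) argument that transforms any given IR solution into one with the stated structure while keeping every $x_m$ fixed. I would process the users one at a time: for a fixed user $i$, I would repeatedly push commodity $i$ onto her own direct edge $\langle p_i, q_{d_i}\rangle$ until $x_{i,i}$ reaches its capacity $b_{i,d_i}$. The observation that makes this possible is the IR hypothesis $x_i \geq b_{i,d_i}$, which guarantees that user $i$ always demands at least the full capacity of her own direct edge, so there is always enough of commodity $i$ available to fill it.

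The heart of the argument is a single exchange step that strictly increases $x_{i,i}$ whenever $x_{i,i} < b_{i,d_i}$. Since $x_i \geq b_{i,d_i} > x_{i,i}$, commodity $i$ carries positive indirect flow, so by flow decomposition there is a path $P_1$ in $G$ carrying commodity $i$ from some edge $\langle p_j, q_{d_i}\rangle$ with $j \neq i$ to $p_i$. I would split into two cases according to whether $\langle p_i, q_{d_i}\rangle$ has slack. In Case A (slack), I cancel a positive amount $\theta$ of flow along $P_1$ and add $\theta$ to $x_{i,i}$; this leaves $x_i$ unchanged, keeps the load at vertex $i$ fixed (the drop of $\theta$ in incoming indirect flow exactly offsets the rise of $\theta$ in $x_{i,i}$, by the constraint $x_i = x_{i,i} + \sum_{e'} f_{i,e'}$), and only relaxes the capacity constraints at the intermediate vertices and edges of $P_1$.

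The main obstacle is Case B, where $\langle p_i, q_{d_i}\rangle$ is already saturated by some other commodity $k$ with $x_{k,i} > 0$; note that $d_k = d_i$, since $k$ uses this edge. Here I would decompose commodity $k$ to obtain a path $P_2$ in $G$ along which commodity $k$ travels from $p_i$ to $p_k$, and then swap $\theta$ units: commodity $i$ takes over its own edge (so $x_{i,i}$ rises by $\theta$ and its indirect source $x_{i,j}$ drops by $\theta$), while commodity $k$ is moved off $\langle p_i, q_{d_i}\rangle$, instead entering through $\langle p_j, q_{d_i}\rangle$ and following the concatenated walk $P_1 \cdot P_2$ from $p_j$ through $p_i$ to $p_k$. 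The cleanest way to verify feasibility is the relabeling viewpoint: on every physical edge the total flow summed over commodities is unchanged, since we merely reassign which commodity occupies it, so all edge-capacity and vertex-capacity constraints continue to hold and every $x_m$ is preserved. This is exactly where the common-destination structure $d_i = d_k$ is needed, so that the single source edge $\langle p_j, q_{d_i}\rangle$ can legally carry either commodity, and the IR bound $x_i \geq b_{i,d_i}$ is what guarantees the step is available whenever $x_{i,i} < b_{i,d_i}$.

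To finish, I would iterate the exchange until $x_{i,i} = b_{i,d_i}$; equivalently, among all IR solutions with the $x_m$ held fixed, take one maximizing $x_{i,i}$ (the feasible region is a compact polytope, so the maximum is attained), and observe that if it were below $b_{i,d_i}$ the exchange step would contradict maximality. Saturation then forces $\sum_{m \neq i} x_{m,i} = 0$ by the edge-capacity constraint, so no other commodity uses edge $i$ afterward. Processing the users in turn, I would note that the transformation for user $i$ never decreases $x_{i',i'}$ for any other $i'$: Case A touches only commodity $i$, and Case B alters only $x_{i,i}$, $x_{i,j}$, $x_{k,i}$, $x_{k,j}$, none of which is a diagonal term $x_{m,m}$ for $m \neq i$ that could decrease. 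Since a saturated edge $i'$ cannot subsequently be disturbed, after $n$ rounds I obtain an IR solution with the same $x_1, \dots, x_n$ and with $x'_{i,i} = b_{i,d_i}$ for every $i$, as required.
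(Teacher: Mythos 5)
Your proposal is correct and follows essentially the same route as the paper: both rest on a flow-exchange argument that uses the IR bound $x_i \geq b_{i,d_i}$ to push commodity $i$ onto its own direct edge, and the common-destination fact $d_k = d_i$ to reroute any other commodity occupying that edge along the freed indirect path. The paper organizes this as two sequential phases (first saturate each direct edge, then reassign its flow to user $i$) whereas you fold both into one case analysis with a more careful feasibility check, but the underlying argument is the same.
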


    The lemma states that, it is without loss of generality to restrict to allocations where each user exhausts all its bandwidth.

\begin{proof}
    First we prove that all bandwidth in $\langle i, d_i \rangle$ will always be
    used in some solution. Because of the inequalities $x_i \geq b_{i, d_i}$, the final bandwidth
    $x_i$ is at least $b_{i, d_i}$. If the bandwidth in the edge $\langle p_i, q_{d_i} \rangle$ is not fully exhausted
    (let  $b'$ be the remaining bandwidth in that edge), we can remove $b_{i, d_i} - b'$ bandwidth from other paths and add it into the
    edge $\langle i, d_i \rangle$. This operation does not violate any constraint.
    So we can repeat this operation and finally all $\langle i, d_i \rangle$ is fully exhausted.

    After having such a solution, we can modify some flow assignment so that
    all bandwidth in $\langle i, d_i \rangle$ will always be used by $i$ in some solutions.
    Now every edge $\langle p_i, q_{d_i} \rangle$ is fully used, we can use exchange techniques to let
    $\langle p_i, q_{d_i} \rangle$ is only used by $i$. If part of $\langle p_i, q_{d_i}\rangle$ is used by other user $j$,
    there must be a path from $p_i$ to $p_k$ and then from $p_k$ to $q_{d_i}$ with bandwidth $\varepsilon$.
    We can give this path to user $j$ (attached to the path from $p_j$ to $p_i$)
    and give $\varepsilon$ units bandwidth in $\langle p_i, q_{d_i}\rangle$ to $p_i$. Repeating this
    operations, we finally let all bandwidth in $\langle p_i, q_{d_i}\rangle$  be used only by user $i$.
\end{proof}
We now prove that the utility functions are monotone:
\begin{proof}
For any $i \in \mathcal{N}$ and any inputs $s_1, \cdots, s_n, p$, $i$ will get $u_i(o,p) = - \frac{c_i}{x_i}$ from
the best outcome $o$.
For any $0 < \delta \leq  - \frac{c_i}{x_i} + \frac{c_i}{b_{i, d_i}}  $,
we can construct an outcome $o'$ with utility profile $\{u_1(o, p), \cdots, u_i(o, p) - \delta, \cdots, u_n(o, p)\}$
from a solution for $o$.  A simple observation is that $x'_i = \frac{- c_i}{u_i(o, p) - \delta}$. So one straightforward
method is to decrease $x_i$ to $x'_i$.  Let $\varepsilon = x_i - x'_i$.
    Because of lemma \ref{alluse}, there is a solution $\{x_{j, k}\}, \{f_{j, e}\}$ to get the same utility profile of $o$
    such that $x_{j, j} = b_{j, d_j}$ and $\sum_{k=1}^n x_{j,k} = x_j$ for all $j$. Now we construct the
    following outcome $o'$:
    \begin{itemize}
        \item $x'_{i, i} = x_{i, i} - \varepsilon$, and for other $x'_{j, k} = x_{j, k}$;
        \item $x'_{i} = x_i - \varepsilon$, and for other $x'_j = x_j$;
        \item $f'_{j, e} = f_{j, e}$ for all possible $j, e$;
    \end{itemize}

    Now we check whether the solution violates any constraint. For the first two inequalities, they still
    hold.  For the third and sixth one, they aren't related to any new variables so it still holds.
    Because $  0 < \delta \leq -\frac{c_i}{x_i} + \frac{c_i}{b_{i, d_i}} $,
    we find that $x'_i = \frac{-c_i}{u_i(o, p) - \delta} \geq b_{i, d_i}$.

For the fifth one,
    \begin{eqnarray*}
        \sum_j x'_{i, j} + \sum_{e', j} f'_{j, e'} &=&
        x'_{i, i} + \sum_{j\ne i} x'_{i, j} + \sum_{e',j} f'_{j, e'}  \\
        &=& x_{i, i} - \varepsilon + \sum_{j\ne i} x_{i, j} + \sum_{e',j} f_{j, e'}  \\
        &\leq& v_i - \varepsilon
        = v'_i
    \end{eqnarray*}

    Then, we will show that $x_1(\cdot), \cdots,$ $ x_n(\cdot)$ is continuous functions of variable $\varepsilon$. We can use the
    following lemma~\cite{adler199tgeometric}: for the function $x_1(\varepsilon), $ there exists a closed interval $[\alpha, \beta]$,
    such that the LP is infeasible for all $\varepsilon \not\in [\alpha, \beta]$; and in the
    interval $[\alpha, \beta] $, $x_1(\varepsilon) \subseteq \mathbb{R}$ and
    it is a continuous convex piecewise linear function. So for $x_1(\cdot)$, it is a continuous function.

    Now using induction, if we have proven that $x_1(\cdot) , \cdots, x_k(\cdot)$ is continuous functions, then
    $x_{k+1}(\varepsilon) = x_{k+1}(\varepsilon, x_1(\varepsilon), \cdots, $ $x_k(\varepsilon))$.
    Using the theorem about multiparametric linear programming~\cite{gal1972multiparametric}, we know that $x_{k+1}$ is a
    continuous function over $\{\varepsilon, x_1 ,\cdots, x_k\}$. Also because $x_i(\cdot)$ is a continuous function
    of $\varepsilon$, it is easy to verify $x_{k+1}(\cdot)$ is a continuous function of $\varepsilon$ according to the definition.

    All the constraints hold. So these utility functions are monotone.
\end{proof}

Lemma \ref{monotone-utility} implies algorithm \ref{algo:1} is strategy-proof.


In addition,  we can obtain order envy-freeness by sorting agents according to weights $\{l_j\}$  and types $\{s_j\}$ and optimizing
them one by one. This is quite straightforward by theorem \ref{oef};

To conclude the section, we have the following theorem:
\begin{theorem}
    There exists an IR and SP mechanism which minimizes maximal cost/delay and
    an IR, OEF and SP mechanisms which gives a Pareto efficient solution for the route allocation problem.
\end{theorem}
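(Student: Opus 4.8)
The plan is to obtain this concluding theorem purely by assembling the black-box reductions established earlier, so that no genuinely new argument is needed beyond checking each reduction's hypotheses for the concrete network environment. I would split the statement into its two halves and dispatch each by invoking one reduction theorem. For the minmax half, I would first observe that since a user's delay is $c_i/x_i$ and her utility is $u_i(o,p) = -c_i/x_i$, minimizing the maximum delay is \emph{identical} to maximizing the minimum utility $w(o) = \min_i u_i(o,p)$; thus the minmax objective is exactly the maxmin objective of Theorem~\ref{ir+sp}. I would then cite the lemma stating that this environment is resource monotone and continuous, note that the IR floor $r_i(p) = b_{i,d_i}$ is encoded by constraint~(4) of Algorithm~\ref{findD}, and argue that Algorithm~\ref{findD} correctly computes the optimal maxmin value: for a fixed delay bound $D$ the multi-commodity feasibility problem is a genuine LP (all constraints are linear), feasibility is monotone in $D$, so a binary search over $[0,\max_i c_i/b_{i,d_i}]$ converges to the optimal $D^{\ast}$. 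Feeding this optimal algorithm into Theorem~\ref{ir+sp} then yields Algorithm~\ref{maxmin_network} as an IR and SP mechanism attaining the same value, settling the first claim.

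For the Pareto-efficient half I would invoke Theorem~\ref{pe+sp}, whose hypotheses are resource monotonicity, monotone utility functions, the IR bound $u_i(o,p)\ge r_i(p)$, and continuity of $s_i \mapsto u_i(\mathcal{A}(s_1,\dots,s_i,\dots,s_n,p),p)$. Resource monotonicity is the already-established lemma; the IR bound is constraint~(4) of Algorithm~\ref{algo:1}; and Algorithm~\ref{algo:1} itself serially maximizes the profile $(u_1,\dots,u_n)$ by solving $n$ successive LPs that freeze the already-optimized coordinates via constraint~(7). The monotone-utility hypothesis is precisely Lemma~\ref{monotone-utility}, which rests on Lemma~\ref{alluse} (that one may assume each user exhausts its direct edge $\langle p_i, q_{d_i}\rangle$) together with the explicit construction of an outcome $o'$ that lowers only $x_i$ by $\varepsilon$ while leaving every other flow and every constraint intact. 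The continuity hypothesis follows from the parametric-LP facts cited in the proof of Lemma~\ref{monotone-utility}: a single optimized coordinate is a continuous piecewise-linear convex function of $\varepsilon$ on its feasibility interval, and an induction using multiparametric linear programming propagates continuity through the frozen coordinates. With all four hypotheses verified, Theorem~\ref{pe+sp} certifies that Algorithm~\ref{algo:1} is itself an IR and SP mechanism returning a serially optimal, hence Pareto efficient, outcome. Order envy-freeness is then obtained for free by the modified reduction of Theorem~\ref{oef}: presorting the agents by $l_i s_i$ before the serial optimization preserves IR, PE, and SP and adds OEF as a plus.

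The hard part is not the assembly but the verification of the two structural properties feeding Theorem~\ref{pe+sp}, namely monotone utility and continuity of $x_i(\cdot)$. The delicate point in the former is exhibiting, for a prescribed drop $\delta$ in $u_i$, a feasible reallocation that perturbs \emph{only} agent $i$'s flow; the reduction to exhausting the direct edge (Lemma~\ref{alluse}) is exactly what makes such a purely local perturbation available without disturbing any other user's utility. The delicate point in the latter is that continuity must survive the freezing of each coordinate, not merely hold for the first one, which is why the multiparametric-LP machinery is required rather than a single-shot LP-sensitivity argument. Once these two technical facts are granted, the theorem is immediate.
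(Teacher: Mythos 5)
Your proposal is correct and follows essentially the same route as the paper: both halves are dispatched by plugging the network environment into Theorems~\ref{ir+sp}, \ref{pe+sp}, and \ref{oef}, with the hypotheses discharged by the resource-monotonicity/continuity lemma, Lemma~\ref{alluse}, Lemma~\ref{monotone-utility}, and the parametric-LP continuity argument, exactly as the paper does. Your identification of the two genuinely technical steps (the purely local perturbation enabled by exhausting the direct edge, and propagating continuity through the frozen coordinates via multiparametric LP) matches where the paper itself invests its effort.
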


\section{Conclusion and futher work}

In this paper, we studied truthful mechanism design for a general class of resource allocation settings, where the center redistributes the private resources brought by individuals. We designed truthful mechanisms that achieve two objectives: maxmin and Pareto efficiency. For each objective, we provided a reduction that converts {\em any optimal algorithm} into a {\em strategy-proof mechanism} that achieves the {\em same objective}. Applying the reductions, one can systematically produce strategy-proof mechanisms in a non-trivial application: network route allocation.

There are a number of exciting research directions to extend this work. First of all, the mechanism for the maxmin objective is wasteful in the sense that all the players are brought down to a utility level that equals the maxmin. Is there another strategy proof mechanism that achieves maxmin but yields better social welfare? Second, one can consider the same set of objectives under more general settings where randomization is allowed. Last but not least, we are also interested in finding more applications to our general algorithmic mechanism design framework.

\bibliographystyle{abbrv}
\bibliography{bandwidth}

\end{document}